\newtheorem{theorem}{Theorem}
\newaliascnt{lemma}{theorem}
\newtheorem{lemma}[lemma]{Lemma}
\newaliascnt{observation}{theorem}
\newtheorem{observation}[observation]{Observation}
\theoremstyle{definition}
\newaliascnt{definition}{theorem}
\newtheorem{definition}[definition]{Definition}
\begin{document}

\title{NC Algorithms for Computing a Perfect Matching and\\ a Maximum Flow
in One-Crossing-Minor-Free Graphs\thanks{A preliminary version of this paper appears in SPAA 2019.}}

\author{David Eppstein and Vijay V. Vazirani\\
Computer Science Department, University of California, Irvine}

\date{ }

\maketitle

\begin{abstract}
In 1988, Vazirani gave an NC algorithm for computing the number of perfect matchings in $K_{3,3}$-minor-free graphs by building on Kasteleyn's scheme for planar graphs, and stated that this ``opens up the possibility of obtaining an NC algorithm for finding a perfect matching in $K_{3,3}$-free graphs.'' In this paper, we finally settle this 30-year-old open problem. Building on recent NC algorithms for planar and bounded-genus perfect matching by Anari and Vazirani and later by Sankowski, we  obtain NC algorithms for perfect matching in any minor-closed graph family that forbids a one-crossing graph. This family includes several well-studied graph families including the $K_{3,3}$-minor-free graphs and $K_5$-minor-free graphs.  Graphs in these families not only have
unbounded genus, but can have genus as high as $O(n)$. Our method applies as well to several other problems related to perfect matching.
In particular, we obtain NC algorithms for the following problems in any family of graphs (or networks) with a one-crossing forbidden minor:
\begin{itemize}\itemsep 0pt
\item Determining whether a given graph has a perfect matching and if so, finding one.
\item Finding a minimum weight perfect matching in the graph, assuming that the edge weights are polynomially bounded.
\item Finding a maximum $st$-flow in the network, with arbitrary capacities.
\end{itemize}
The main new idea enabling our results is the definition and use of \emph{matching-mimicking networks}, small replacement networks that behave the same, with respect to matching problems involving a fixed set of terminals, as the larger network they replace. 
\end{abstract}

\section{Introduction}

Obtaining an NC algorithm for matching has been an outstanding
open question in theoretical computer science for over three decades, ever since the discovery of RNC~matching algorithms \cite{KarUpfWig-Comb-86, MulVazVaz-Comb-87}.
In a recent breakthrough result, Anari and Vazirani gave an NC algorithm for finding a perfect matching in planar graphs \cite{AnaVaz-FOCS-18}. 
Subsequently, Sankowski provided an alternative algorithm for the same problem based on different techniques~\cite{San-ICALP-18}.
By using a reduction from flow problems on other surfaces to planar flow~\cite{BorEppNay-SCG-16},
Anari and Vazirani also extended their result to graphs of bounded genus. 
Their paper restated the open problem of obtaining an NC algorithm for finding a
perfect matching in $K_{3,3}$-minor-free graphs, in particular because such graphs
can have genus as high as $O(n)$. This problem was previously stated by 
Vazirani in a 1989 paper in which he gave an NC algorithm for computing the number of 
perfect matchings in such graphs~\cite{Vaz-IC-89} and
stated that this ``opens up the possibility of obtaining an NC algorithm for finding a perfect 
matching in $K_{3,3}$-free graphs.''\footnote{Vazirani stated this problem in terms of graphs with no subgraph homeomorphic to $K_{3,3}$, rather than with no $K_{3,3}$ minor. However, for 3-regular graphs such as $K_{3,3}$, subgraphs homeomorphic to $H$ are equivalent to $H$-minors. We use the formalism of minors because it fits better with our generalization to other forbidden minors. We vary from Vazirani in using the terminology ``$K_{3,3}$-minor-free'' rather than ``$K_{3,3}$-free'' to avoid confusion with a third, unrelated meaning, the graphs that do not have $K_{3,3}$ as induced subgraphs.}  In this paper, we finally settle this 30-year-old open problem. 

\begin{figure}[t]
\centering\includegraphics[scale=0.45]{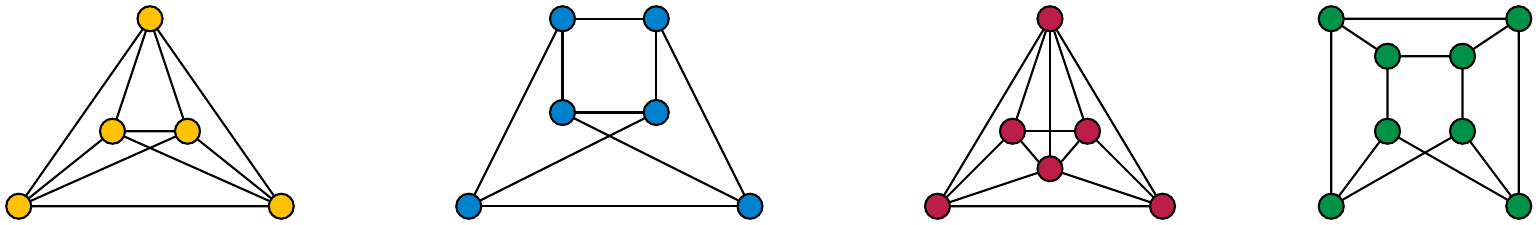}
\caption{Forbidden minors for previously-studied one-crossing-minor-free graph classes, and their one-crossing drawings: $K_5$ (yellow), $K_{3,3}$ (blue), $K_6-2e$ (red), and the Wagner graph (green)}
\label{fig:k5-k33-w8}
\end{figure}

The $K_{3,3}$-minor-free graphs are particularly attractive as a target for this problem because they form a natural extreme case for certain approaches. In particular, they are known to have Pfaffian orientations, by which their matchings can be counted using matrix determinants~\cite{Kas-GTTP-67,Lit-CM-74}, while for $K_{3,3}$ itself and for any minor-free family that does not forbid it, this tool is unavailable.
However, our result breaks through this barrier: we give an NC algorithm for finding a perfect matching in graphs belonging to any one-crossing-minor-free class of graphs.
That is, if $H$ is any graph that can be drawn in the plane with only one crossing pair of edges,
then we can find perfect matchings in the $H$-minor-free graphs in NC.
Because $K_{3,3}$ can be drawn with one crossing, our result includes in particular the 
$K_{3,3}$-minor-free graphs. Other previously-studied graph classes to which our result applies include the $K_5$-minor-free graphs~\cite{Wag-MA-37}, the $(K_6-2e)$-minor-free graphs~\cite{Mah-JGT-08}, and the Wagner-minor-free graphs~\cite{MahRob-JCTB-16}; see \autoref{fig:k5-k33-w8} for the forbidden minors of these classes.

The one-crossing-minor-free graphs are not themselves a class of individual graphs, but of graph classes. They are contained in the family of minor-closed graph classes, and they contain both the crossing-free (planar) and the bounded-treewidth minor-closed graph classes; see \autoref{fig:venn}. One-crossing-minor-free graphs are noteworthy, among minor-closed graph classes, for obeying a simpler structure theorem. The graphs in any minor-closed graph class can be decomposed into subgraphs of bounded genus using several operations called ``clique-sums'', ``apices'', and ``vortices''. If a graph class has a forbidden minor with one crossing, this decomposition can be simplified to use only planar subgraphs (instead of more general bounded-genus subgraphs), and only clique-sum operations (instead of the other two operations)~\cite{RobSey-GST-91}. In this way, algorithms on one-crossing-minor-free graphs are motivated not just by the specific classes of graphs to which they apply, but as a step towards handling the operations needed for the full minor-closed structural decompositions. For this reason, several past works have studied algorithms for one-crossing-minor-free classes of graphs~\cite{ChaEpp-JGAA-13,DHN-JCSS-04,DHT-Algo-05,Kam-TCS-12,StrThiWag-TCS_16}.

\begin{figure}[t]
\centering\includegraphics[scale=0.45]{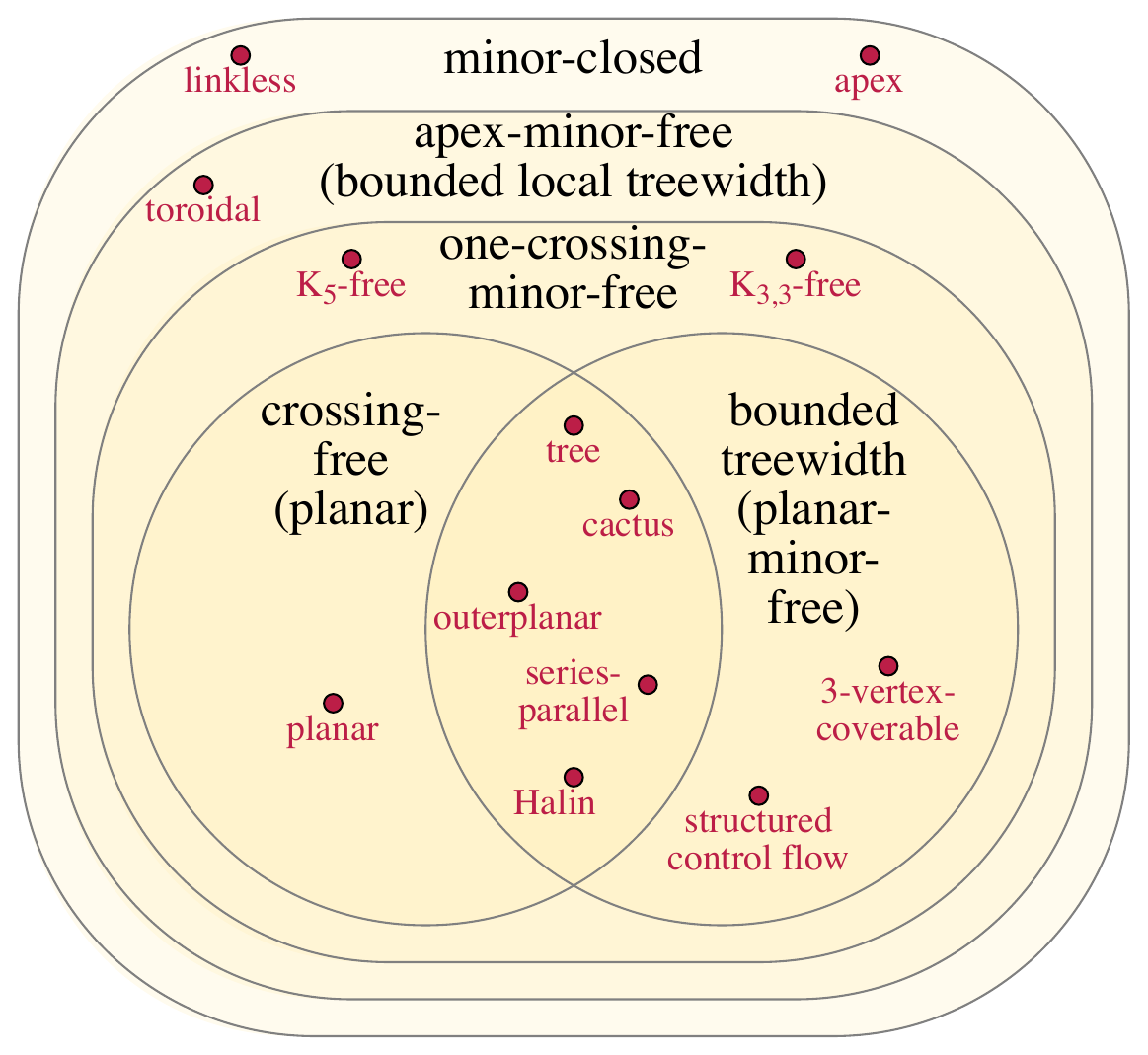}
\caption{A Venn diagram of minor-closed classes of graphs (red points), and properties of these classes (yellow outlined regions), including the one-crossing-minor-free classes and the apex-minor-free classes. Our results apply to all graph classes within the ``one-crossing-minor-free'' region.}
\label{fig:venn}
\end{figure}

We build on our methods for finding perfect matchings to give NC algorithms that, on the same one-crossing-minor-free families of graphs, compute a perfect matching of minimum weight 
when the weights are polynomially-bounded integers.
In another direction, we obtain an NC algorithm for finding a maximum $st$-flow in any 
flow network whose underlying undirected graph belongs to a one-crossing-minor-free family.
This generalizes Johnson's 1987 result~\cite{Joh-JACM-87}, showing that 
maximum $st$-flow in a planar network is in NC; we note that this result was used crucially in the planar graph perfect matching NC algorithm of \cite{AnaVaz-FOCS-18}. 

\subsection{Technical ideas}

Our main new technical idea is that of a \emph{matching-mimicking network}.
Given a graph $G$ and a set $T$ of terminal vertices, a matching-mimicking network is a
graph $G'$, containing $T$, that has the same pattern of matchings:
every matching of $G$ that covers $G\setminus T$ corresponds to a matching of $G'$ that
covers $G'\setminus T$ and vice versa.

We show that matching-mimicking networks exist for graphs with any bounded number of terminals.
The size of these networks is bounded by a function of the number of terminals. For at most three terminals, we describe these networks explicitly. In this case, the mimicking networks are planar and remain planar when glued into the triangular face of a larger planar network, and can be given edge weights so that their minimum-weight perfect matchings (for each subset of terminals) mimic the weights of the minimum-weight matchings in the given graph. Both of these properties, their planarity and their weight-mimicking ability, are needed by our algorithm. 

In the past, mimicking networks for network flow were defined and used by numerous 
researchers for obtaining flow algorithms~\cite{HagKatNis-JCSS-98,ChaSubWag-Algo-00,KraRik-SODA-12,ChaEpp-JGAA-13,KhaRag-IPL-14}.
These mimicking networks were first defined to prove that maximum flow can be found
in NC in graphs of bounded treewidth~\cite{HagKatNis-JCSS-98}
and later used also in efficient sequential algorithms for flow in
one-crossing-minor-free graphs~\cite{ChaEpp-JGAA-13}. Their theoretical properties have also become an object of study in their own right~\cite{ChaSubWag-Algo-00,KraRik-SODA-12,KhaRag-IPL-14}.
It seems likely that, similarly, our matching-mimicking networks will lead to algorithmic applications beyond our NC matching algorithm, and additional theory beyond our existence proof.

As with a previous sequential flow algorithm of Chambers and Eppstein~\cite{ChaEpp-JGAA-13}, we exploit the structural decomposition of graphs with a one-crossing forbidden minor~\cite{RobSey-GST-91}, by repeatedly using mimicking networks to simplify this structure. However the order in which we perform these replacements must be more carefully chosen so that our algorithms run in NC.
Each step of the replacement process involves the computation of matchings either in a 
bounded-treewidth graph or in a planar graph. The planar matchings can be found by the new results of Anari and Vazirani or of Sankowski, and the bounded-treewidth matchings can be found in NC by using log-space versions of Courcelle's theorem~\cite{ElbJakTan-FOCS-10}.\footnote{More specifically, Theorem 1.3 of Elberfeld, Jaoby, and Tantau~\cite{ElbJakTan-FOCS-10} provides a logarithmic space algorithm, for any monadic second order graph property with a free set variable, that counts the sets of each cardinality for which a given bounded-treewidth graph models the formula. Matching may be expressed in this way: a set $S$ of edges forms a matching if the graph models a formula stating that no two edges of $S$ share a vertex. The claim follows from the inclusion of LOGSPACE in NC.}

Our maximum $st$-flow result uses a similar algorithmic outline, with flow-mimicking networks 
in place of matching-mimicking networks. Our method differs from the sequential algorithm of 
Chambers and Eppstein~\cite{ChaEpp-JGAA-13}, which used flow-mimicking networks on at most three terminals to replace leaf nodes of the structural decomposition tree. Our matching algorithms, also, use matching-mimicking networks on at most three terminals, replacing subtrees of more than one node in a single step. However, our parallel flow algorithm uses flow-mimicking networks for a second purpose, namely to replace a component of our matching algorithm that involves semi\-ring matrix multiplication. This part of our flow algorithm requires flow-mimicking networks on up to six terminals.

\subsection{History and related results}
 
In a seminal paper, Lovasz~\cite{Lov-FCT-79} proposed a way of computing a perfect matching using
methods quite different from the combinatorial (augmenting-path-finding) methods that were 
the mainstay at the time. His method used linear algebra and randomization; the connection to linear algebra being established via the Tutte matrix of the given graph. Although not mentioned explicitly in this paper, it was clear that his methods gave an RNC algorithm for the decision problem of determining if a given graph has a perfect matching.

When combinatorial methods were found to be lacking for obtaining a fast parallel 
matching algorithm, researchers turned to Lovasz's proposed method.
The first RNC~algorithm for finding a perfect matching was
obtained by Karp, Upfal, and Wigderson~\cite{KarUpfWig-Comb-86}. 
This was followed by a somewhat simpler 
algorithm due to Mulmuley, Vazirani, and Vazirani~\cite{MulVazVaz-Comb-87}; their RNC algorithm also extends to finding a minimum weight perfect matching in case the edge-weights are polynomially bounded.

Matching has played a central role in the development of the theory of algorithms, in that its
study, from various computational viewpoints, has led to quintessential paradigms and powerful 
tools for the entire theory. These include the notion of polynomial time solvability \cite{Edmonds65}, the counting class \#P \cite{Valiant79} and a polynomial time equivalence between random generation and approximate counting for self-reducible problems \cite{JVV}, which lies at the core of the Markov chain Monte Carlo method. The perspective of parallel algorithms has also led to such a gain, namely the Isolation Lemma \cite{MulVazVaz-Comb-87}, which has found several applications in complexity theory and algorithms.

However, this development still did not clarify whether randomization was essential for fast 
parallel matching. Considering the fundamental insights gained by an algorithmic study of
matching and the possibility of further insights by setting this question, this has remained a significant open question ever since the 1980s.
The first substantial progress on this question was made by Miller and Naor in 1989~\cite{MillerN}.
They obtained an NC algorithm for finding a maximum flow from a set of sources to a set of
sinks in a planar network; as a corollary, they obtained an NC~algorithm for finding a 
perfect matching in bipartite planar graphs.
As is well known, Kasteleyn's algorithm for counting the number of perfect matchings
in a planar graph~\cite{Kas-GTTP-67} can be easily made into an NC algorithm for counting 
perfect matchings by using Csanky's NC
algorithm for the determinant of a matrix~\cite{Csa-SICOMP-76}. In 2000, Mahajan and Varadarajan gave an elegant way of using this NC algorithm for counting 
perfect matchings to find one, 
hence giving a different NC algorithm for bipartite planar graphs~\cite{MahajanV}.
 (The NC counting algorithm was also used
critically in Anari and Vazirani's NC algorithm for non-bipartite planar graphs~\cite{AnaVaz-FOCS-18}.)

After a decade and half of lull, there has been a resurgence of activity on this
problem over the last five years. In particular, several researchers have obtained quasi-NC 
algorithms for matching and its generalizations. Such an algorithm
runs in polylogarithmic time; however, it requires $O(n^{\log^{O(1)} n})$ processors. All the 
algorithms in this line of research work by a partial derandomization of the Isolation Lemma. 
This line of work was started by Fenner, Gurjar, and Thierauf, who gave a quasi-NC algorithm 
for perfect matching in bipartite graphs \cite{FenGurThi-CACM-19}. Later,
Svensson and Tarnawski extended the result to general graphs \cite{SveTar-FOCS-17}. 
The generalization of bipartite matching to the linear matroid intersection problem was
given by Gurjar and Thierauf~\cite{GurjarT} and to  
finding a vertex of a polytope with totally unimodular constraints by Gurjar, Thierauf, and 
Vishnoi~\cite{GurjarTV}. 
Researchers have also developed pseudo-deterministic RNC matching algorithms, which output the same (unique) solution for almost all choices of random bits~\cite{GolGro-ICALP-17,AnaVaz-19}. 

Recently, Anari and Vazirani \cite{AnaVaz-19}  have given what appears to be the culmination of this line of work: An NC algorithm for finding a minimum weight perfect matching in a general graph with polynomially bounded edge weights, provided it is given an oracle for the decision problem. The latter problem is: given a general graph with polynomially bounded edge weights and a target weight $W$, determine if there is a perfect matching of weight at most $W$ in the graph. As mentioned in \cite{AnaVaz-19}, many of the ideas discovered in the last five years figured in a key way in their work. In a similar vein, it is not unlikely that results such as ours, which extend the frontier of NC matching algorithms, are likely to play a critical role towards the eventual resolution of the full problem.

The first NC algorithm for finding a maximum $st$-flow in a planar network was obtained
by Johnson~\cite{Joh-JACM-87}. As stated above, this was followed by an NC algorithm for
finding a maximum flow from a set of sources to a set of sinks in a planar network 
by Miller and Naor~\cite{MillerN}. An NC algorithm for maximum flow 
in graphs of bounded treewidth was given by Hagerup et al.~\cite{HagKatNis-JCSS-98}.
An NC algorithm for counting perfect matchings in the same classes of graphs that we study, the one-crossing-minor-free classes of graphs, was given recently by Straub et al.~\cite{StrThiWag-TCS_16}; by applying a reduction of Kulkarni et al.~\cite{KulMahVar-CJTCS-08} they were also able to construct perfect matchings in NC for bipartite graphs in these families, but not in the general case.
our algorithm has a similar overall structure to theirs, based on finding the structural decomposition of one-crossing-minor-free graph classes and then replacing small components of the decomposition by gadgets.

\section{Matching-Mimicking Networks}

\begin{definition}
A \emph{matching}, in an undirected graph, is a subset of edges no two of which share a vertex.
The matching \emph{covers} a subset of vertices, the ones that are vertices of the selected edges.
If $G$ is a graph with a specified subset $T$ of vertices,
we define the \emph{matching pattern} of $G$ to be a family of subsets of $T$,
the subsets $X\subset T$ such that some matching of $G$ covers $(G\setminus T)\cup X$ (and covers no other vertices).
If $G$ and $G'$ are two graphs, both containing a shared subset $T$ of vertices,
we say that $G$ and $G'$ are \emph{matching-equivalent} on $T$
if they have the same matching patterns.
A \emph{matching-mimicking network} for $G$ and $T$ is any other graph $G'$ containing $T$ that 
is matching-equivalent to $G$ on $T$.
\end{definition}

\begin{lemma}
\label{lem:mimic-size-func}
There is a function $f$ such that any graph $G$ and subset of vertices $T$ has a 
matching-mimicking network of at most $f(|T|)$ vertices.
\end{lemma}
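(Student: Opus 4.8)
The plan is to prove this by a pure counting argument rather than by constructing the networks explicitly. Fix $t=|T|$. By definition the matching pattern of a graph containing $T$ is a subset of the power set $2^T$, a set of size $2^t$; hence there are at most $2^{2^t}$ possible matching patterns, and the relation of being matching-equivalent on $T$ partitions the class of all (finite) graphs containing $T$ into at most $2^{2^t}$ equivalence classes, each identified with a single matching pattern.

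Next I would note that every matching pattern that occurs at all is realized by some finite graph --- trivially, by any member of its equivalence class, since all graphs under consideration are finite. So for each realizable pattern $P$ on a $t$-element terminal set, the set of vertex counts of graphs realizing $P$ is a nonempty set of nonnegative integers and therefore has a minimum, say $g(P)$. Because there are only finitely many realizable patterns (at most $2^{2^t}$ of them), we may set $f(t) := \max_P g(P)$, the maximum taken over all realizable $P$; this is a well-defined finite quantity depending only on $t$.

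Finally, given an arbitrary $G$ and $T$ with $|T|=t$, the graph $G$ itself realizes its own matching pattern $P(G,T)$, so $P(G,T)$ is realizable, and by the choice of $f$ there is a graph $G'$ on at most $g(P(G,T)) \le f(t) = f(|T|)$ vertices with the same matching pattern as $G$. By definition $G'$ is a matching-mimicking network for $G$ and $T$, which proves the lemma.

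I do not anticipate a real obstacle here: the argument is entirely non-constructive, and the only points requiring care are the bookkeeping --- that ``containing $T$'' is the correct ambient notion, so that the patterns really are subsets of one fixed finite set once $T$ is pinned down, and that minima of nonempty sets of nonnegative integers exist. The genuine work comes afterward and is not needed for this lemma: the bound $f(t)$ obtained this way is astronomically large and opaque, whereas the algorithmic applications require the explicit, planarity-preserving, and weight-preserving matching-mimicking networks for $|T|\le 3$ that the paper constructs separately.
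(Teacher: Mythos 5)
Your argument is the same non-constructive counting argument the paper uses: there are at most $2^{2^{|T|}}$ possible matching patterns, each realizable one has a minimum-size realizer, and $f(|T|)$ is the maximum of these finitely many minima. The proposal is correct and simply spells out the paper's two-sentence proof in more detail.
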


\begin{proof}
This follows immediately from the fact that $T$ has $2^{|T|}$ subsets,
and therefore that $G$ has at most $2^{2^{|T|}}$ matching patterns.
For each matching pattern that can be achieved, we may take the mimicking network to be the smallest graph that achieves that matching pattern.
The worst-case size of the resulting mimicking network is the largest size of a finite set of graphs (one for each achievable matching pattern) and is therefore bounded as a function of $|T|$.
\end{proof}

This result is not very explicit, as we do not have an algorithm for determining whether a matching pattern can be achieved nor for finding the smallest graph that achieves it. Therefore, carrying out this method of constructing matching-mimicking networks requires a case analysis to determine which patterns are achievable and which graphs achieve them. It would be of interest to find an explicit algorithm for constructing a matching-mimicking network of bounded size. In contrast, for flow-mimicking networks, the following algorithm can be used: for each nontrivial partition of the terminals, find a minimum cut in the network that separates the two sides of the partition, identify two vertices of the network as equivalent when they are on the same side of every cut, and collapse each equivalence class to a single vertex.

\begin{figure}[t]
\centering
\includegraphics[width=0.9\textwidth]{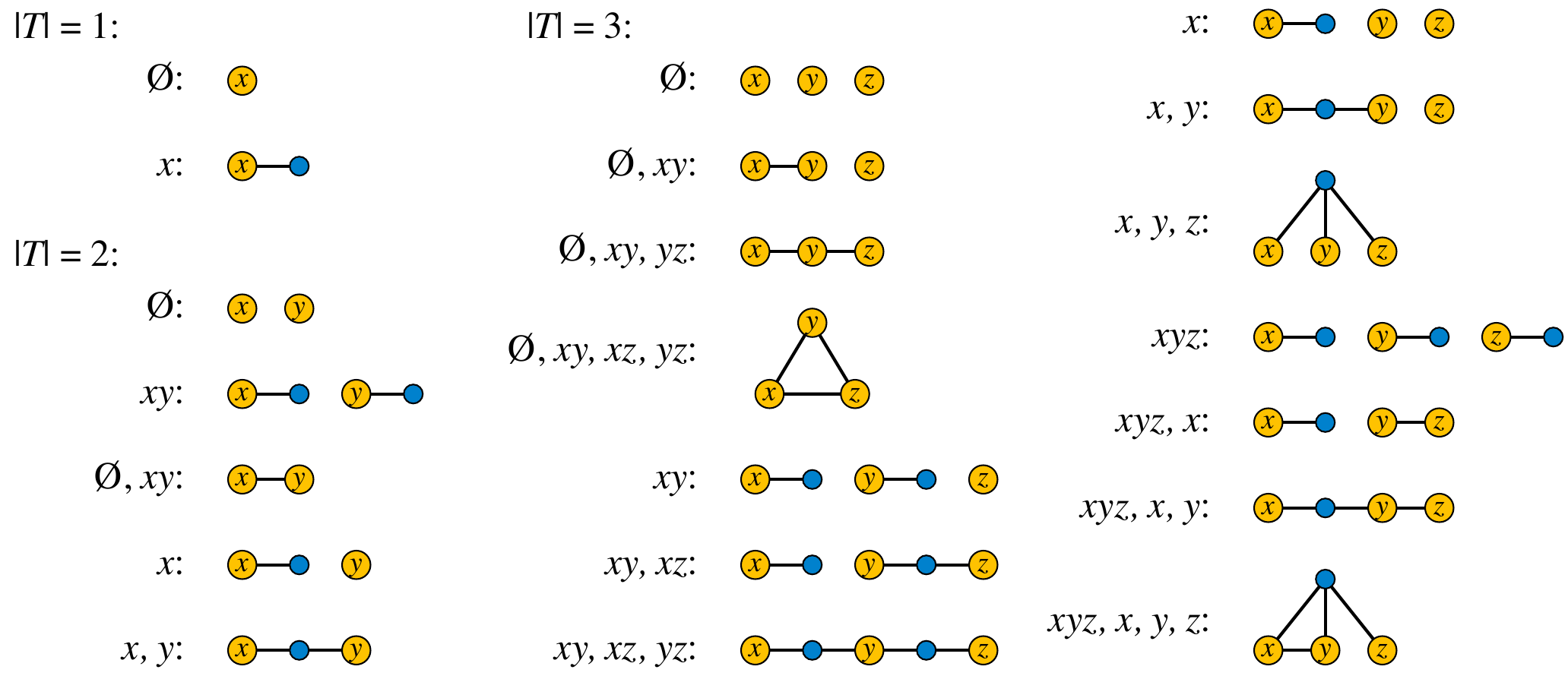}
\caption{Matching-mimicking networks for $|T|\le 3$. In each network, the set $T$ of terminals consists of the labeled yellow vertices; the remaining non-terminal vertices of the network are the smaller blue vertices. The labels denote the subsets of terminals that can be covered by a matching.}
\label{fig:matching-mimic}
\end{figure}

We have performed by hand the case analysis needed to construct matching-mimicking networks  for $|T|\le 3$, the largest number of terminals needed for our algorithms. The results are depicted in \autoref{fig:matching-mimic}. As the figure shows, for $|T|=3$ the number of matching patterns is 14, much smaller than the $2^{2^3}=256$ bound on the number of patterns obtained by plugging $|T|=3$ into the proof of \autoref{lem:mimic-size-func}. We achieve this reduction in the number of cases by combining the following three observations:
\begin{itemize}
\item We omit graphs whose matching pattern is empty. If any such graph is detected during
our algorithm for matching, we may abort the algorithm, as the whole graph has no matching.
\item The sizes of the subsets of~$T$ in any single matching pattern must all have the same parity as each other.
\item We may consider matching patterns to be equivalent whenever one matching pattern can be obtained from another by permuting the vertices of~$T$. We only need to find matching-mimicking networks for each equivalence class of matching patterns.
\end{itemize}

The following property indicates that, when glued into a planar graph, all of the matching-mimicking networks of the figure preserve its planarity. We need this property in our algorithm,
so that we can continue to compute matchings in the result of such gluings.

\begin{lemma}
\label{lem:planarity-preserving}
Let $G$ be a planar graph, let $f$ be a triangular face of a planar drawing of $G$, and let $T$ be a subset of the vertices of $f$. Then the union of $G$ and any of the matching-mimicking networks of \autoref{fig:matching-mimic}, with terminal set $T$, is another planar graph that can be drawn
in the plane with the matching-mimicking network inside~$f$.
\end{lemma}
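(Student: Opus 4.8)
The plan is to reduce the statement to a single standard topological fact about planar graphs together with a finite inspection of \autoref{fig:matching-mimic}. The fact I would use is the following pasting principle: if $G'$ is a planar graph and $S\subseteq V(G')$ is a set of vertices that all lie on the boundary of one face in some planar drawing of $G'$, then for any planar drawing of any planar graph $G$ having a face $f$ whose boundary contains a copy of $S$ in a compatible cyclic order, one obtains a planar drawing of $G\cup G'$ by shrinking the drawing of $G'$ into the interior of $f$ and identifying each vertex of $S$ with the corresponding vertex of $\partial f$. So it suffices to check (i) that each network $G'$ in \autoref{fig:matching-mimic} admits a planar drawing with all of its terminals on a common face, and (ii) that the cyclic-order compatibility condition can always be met.

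For (ii), I would observe that since $f$ is triangular, $|T|\le 3$. A set of at most three labeled vertices placed on a single face can be realized in any cyclic order: the three cyclic rotations are obtained by rotating which corner of $f$ each terminal is glued to, and the two reflection classes are obtained by reflecting the planar drawing of $G'$. Hence all $|T|!$ bijections between the terminals of $G'$ and the vertices of $T$ on $\partial f$ are realizable, and the compatibility condition is automatic; in particular the cases $|T|\le 2$ are immediate special cases.

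For (i), I would carry out the (routine) finite case analysis: each network depicted in \autoref{fig:matching-mimic} has only a bounded number of vertices, and one reads off directly from the figure that it is planar and that its yellow terminal vertices all lie on the outer face of the depicted drawing. Assembling the pieces: given $G$, $f$, and $T\subseteq V(f)$, one selects the appropriate network $G'$ (for the matching pattern of $G$ on $T$), reflects its drawing if necessary to match the cyclic order of $T$ around $f$, scales it into the interior of $f$, and identifies terminals; the pasting principle then yields a planar drawing of $G\cup G'$ with $G'$ drawn inside $f$, as claimed.

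The only delicate point — and the step I would be most careful to address explicitly — is the bookkeeping when $G'$ contains an edge between two terminals that already appears as an edge of $G$ on $\partial f$: the union then acquires a pair of parallel edges. This does not affect planarity, since the copy contributed by $G'$ can be drawn just inside $f$ alongside the boundary edge (and if one insists on a simple graph, the two parallel copies are simply merged). Thus the genuine content of the lemma is carried entirely by the finite inspection of \autoref{fig:matching-mimic}, and no real obstacle remains.
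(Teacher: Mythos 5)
Your proof is correct and takes essentially the same route as the paper: the paper's argument is simply that each network in \autoref{fig:matching-mimic} is drawn outerplanar, so its unbounded face can be surrounded by the triangle $f$ and then the rest of $G$ without new crossings. Your version is a more carefully spelled-out form of this (adding the cyclic-order and parallel-edge bookkeeping that the paper leaves implicit), but the underlying idea — finite inspection of the figure plus gluing an outerplanar drawing into a face — is identical.
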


\begin{proof}
This follows from the layouts given for these networks in the figure, which are all drawn outerplanar (planar and with all vertices belonging to the unbounded face of the drawing).
Because they are outerplanar, their unbounded face can be surrounded by triangle $f$ and then, around $f$, the rest of $G$, without creating any new crossings.
\end{proof}

\section{Structural Decomposition}
\label{sec:struc-decomp}

A $k$-clique-sum of two graphs is defined as a graph that can be obtained from the disjoint union of the two given graphs by identifying a clique of $\le k$ vertices in one of the graphs with a clique of the same size in the other graph, and then optionally deleting some of the edges of the merged clique. One-crossing-minor-free graphs have a structural decomposition that can be described in terms of clique-sums: If $H$ is a graph that can be drawn in the plane with at most one edge crossing, then the $H$-minor-free graphs can be decomposed by $3$-clique-sums into pieces that are either planar or of bounded treewidth~\cite{RobSey-GST-91}. This decomposition generalizes the result that graphs with a planar forbidden minor have bounded treewidth, and
is a simplified form of the structural decomposition of arbitrary minor-closed graph families by Robertson and Seymour, which also includes pieces of bounded genus, apexes (vertices that can be adjacent to any subset of the other vertices in a single piece), and vortexes (subgraphs of bounded pathwidth attached to a face of a bounded-genus piece).
The graphs with one-crossing drawings include $K_{3,3}$ and $K_5$ (\autoref{fig:k5-k33-w8}), whose corresponding minor-free graph classes have even simpler forms of this decomposition: the $K_{3,3}$-minor-free graphs are $2$-clique-sums of planar graphs and $K_5$, and the $K_5$-minor-free graphs are $3$-clique-sums of planar graphs and the eight-vertex Wagner graph (shown in \autoref{fig:k5-k33-w8}, right).

\begin{figure}[t]
\centering\includegraphics[width=0.6\textwidth]{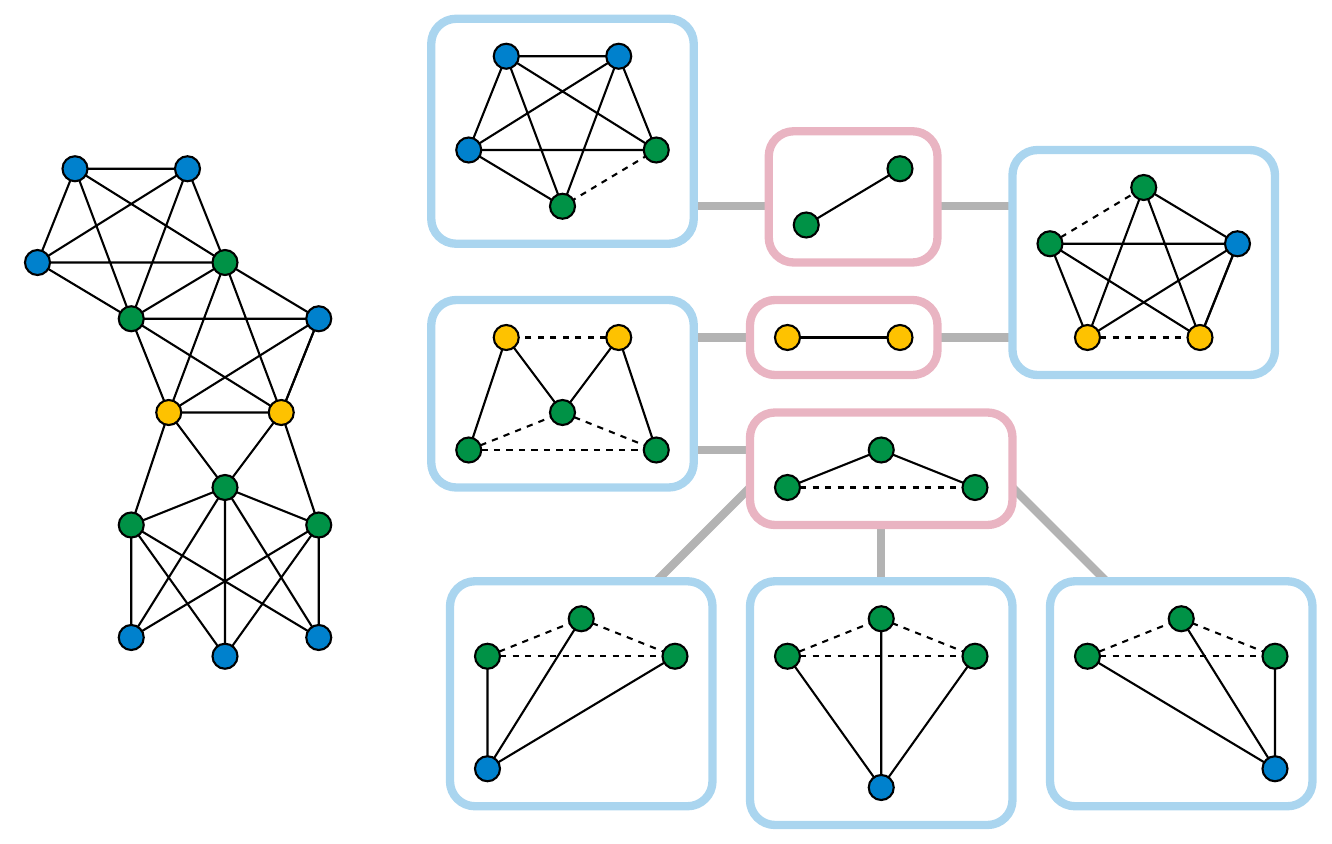}
\caption{Decomposition of a graph into a $3$-clique-sum of simpler pieces (light blue outline) and cliques (light red outline). The dashed edges indicate edges of glued cliques that are to be removed after gluing, either because they are redundant (the same edge appears elsewhere) or because they
are not part of the original graph. Note that this decomposition is not \emph{maximal}, in the sense of \autoref{lem:laminar}, as the piece with two yellow vertices and three green vertices can be partitioned by another separator.}
\label{fig:decomposition}
\end{figure}

We can describe any clique-sum decomposition, such as the decomposition from this structure theorem, as a two-colored tree (\autoref{fig:decomposition}), in which the nodes of one color represent pieces (planar graphs or bounded-treewidth graphs), and the nodes of the other color represent cliques on which two are more pieces are glued. The edges of this tree describe the incidence relation between edges and cliques. Additionally, each clique node of the decomposition tree is labeled with information describing which of its edges are kept as part of the overall graph.
When a planar piece is glued to other graphs along a non-facial triangle, we may split the planar piece into two smaller pieces on that triangle. Therefore, it is safe to assume that, for each planar piece of the decomposition, the 3-vertex cliques incident to it are all faces of a planar embedding of the piece.

For our algorithms, it is necessary not merely to know that this $3$-clique-sum decomposition exists, but also to find it, in NC.
Efficient decomposition algorithms are known for $K_{3,3}$-minor-free graphs and for $K_5$-minor-free graphs~\cite{Asa-TCS-85,ReeLi-LATIN-08}, but they are sequential, and we are not aware of such algorithms for the general case. In previous work on sequential flow algorithms we avoided this issue by assuming that the decomposition was given as part of the input~\cite{ChaEpp-JGAA-13}. Fortunately, in our new results, we do not need the decomposition to be efficient; it merely needs to be in NC.

\begin{definition}
A subset of three vertices is a \emph{separator} if its removal would increase the number of connected components of the remaining graph. We define a family of separators to be \emph{laminar} when
no two vertices from any one separator in the family are separated by any other separator in the family.
\end{definition}

We are interested in finding a decomposition of a given graph, assumed to be from a one-crossing-minor-free family, by minimal separators of at most three vertices.
However, this decomposition is not unique; for instance $K_{3,3}$ has two incompatible minimal separators, the two sides of its bipartition.  There exist graphs (such as the wheel graph) for which there are quadratically many separating triples, most pairs of which are non-laminar, but maximal laminar sets of separating triples in planar graphs may be found sequentially in linear time~\cite{EppRee-SODA-19}. Our graphs are non-planar, and again we are only interested in membership in NC.

\begin{lemma}
\label{lem:laminar}
We can find a maximal laminar family of separators of size $\le 3$, in any graph, in NC.
\end{lemma}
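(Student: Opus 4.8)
The plan is to reduce the problem to finding a maximal independent set in a polynomially‑sized auxiliary graph, and then to invoke the classical fact that maximal independent set is in NC. The key observation is that laminarity of a family of separators is a \emph{pairwise} property: by definition a family is laminar exactly when, for every two of its members $S$ and $S'$, neither of $S,S'$ separates two vertices of the other. Consequently a maximal laminar family of $\le 3$-vertex separators is precisely a maximal independent set in the \emph{conflict graph} $H$ whose vertices are all separators of size $\le 3$, with an edge joining $S$ to $S'$ whenever $S'$ separates two vertices of $S\setminus S'$, or $S$ separates two vertices of $S'\setminus S$.

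First I would construct $H$ in NC. To find its vertex set, examine all $O(n^3)$ subsets of at most three vertices in parallel; for each candidate $S$, compute the connected components of $G$ and of $G\setminus S$, and keep $S$ when deleting it disconnects some previously connected pair (equivalently, when $G\setminus S$ has a component properly contained in a component of $G$). Connected components and reachability are in NC, so this filtering is in NC; if desired one can additionally discard non‑minimal separators by testing the constantly many proper subsets of each surviving $S$. To find the edge set, examine all $O(n^6)$ ordered pairs $(S,S')$ of surviving separators in parallel: compute the components of $G\setminus S'$, declare a conflict when two vertices of $S\setminus S'$ lie in different components, and symmetrize over the two orderings. Each such test is a constant number of component‑label comparisons on top of an NC components computation, so the whole construction of $H$ is in NC, and $H$ has polynomially many vertices and edges.

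Finally, run a deterministic NC algorithm for maximal independent set on $H$. Since $H$ has polynomial size this runs in polylogarithmic time with polynomially many processors, and by the reduction above its output is exactly a maximal laminar family of separators of size $\le 3$. The part to get right is precisely this reduction — recognizing that laminarity is a pairwise condition, so that maximality as a laminar family coincides with maximality as an independent set — after which everything rests on routine NC graph primitives together with the (nontrivial but standard) NC maximal‑independent‑set subroutine; one must also be a little careful to phrase the conflict test in terms of $S\setminus S'$ rather than $S$, so that overlapping separators are handled correctly.
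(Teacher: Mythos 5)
Your proposal is correct and matches the paper's proof essentially step for step: enumerate all $O(n^3)$ candidate separators in parallel, build the polynomial-size conflict (incomparability) graph whose edges are non-laminar pairs, and take a maximal independent set using Luby's NC algorithm, with all separator and laminarity tests reducing to NC connectivity computations. The only additions you make — spelling out the $S\setminus S'$ subtlety and the optional minimality filtering — are refinements of detail, not a different route.
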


\begin{proof}
In parallel, list all subsets of at most three vertices, and check whether each one is a minimal separator. Create an \emph{incomparability graph}, in which the vertices represent separators,
and the edges represent non-laminar pairs of separators. Find a maximal independent set in the incomparability graph. All of the steps of testing whether subsets are minimal separators or whether two separators are non-laminar involve connectivity computations in undirected graphs, which can be done in NC or even LOGSPACE~\cite{Rei-STOC-05}. The incomparability graph has polynomial size, and a maximal independent set in any graph can be constructed in NC~\cite{Lub-SICOMP-86}.
\end{proof}

\begin{definition}
A $Y$--$\Delta$ transformation, of a graph $G$ at a degree-three vertex $v$, consists of removing $v$ from $G$ and replacing it by a triangle connecting its three neighbors.
\end{definition}

The following two results on $Y$--$\Delta$ transformations are folklore:

\begin{lemma}
\label{lem:planar-yd}
If $G$ is planar then so is the result of any set of $Y$--$\Delta$ transformations of $G$.
\end{lemma}

\begin{proof}
Let such a transformation replace $v$ by a triangle $abc$.
From a planar drawing of $G$, we may obtain a planar drawing of the transformed graph, by routing edges $ab$, $bc$, and $ac$ along curves in the plane near the paths $avb$, $bvc$, and $avc$ respectively.
\end{proof}

A $Y$--$\Delta$ transformation can increase the treewidth of a graph; for instance, the claw $K_{1,3}$ has treewidth one but its $Y$--$\Delta$ transformation, the triangle $K_3$, has treewidth two. Nevertheless, the increase is not great:

\begin{lemma}
\label{lem:treewidth-yd}
If $G$ has treewidth $w$ then the result of any set of $Y$--$\Delta$ transformations of an independent set of vertices of $G$ has treewidth $O(w)$.
\end{lemma}

\begin{proof}
Consider any tree-decomposition of $G$ of width $w$; this is a tree whose vertices, called \emph{bags}, are associated with sets of at most $w+1$ vertices of $G$, such that each vertex of $G$ is contained in the bags for a connected subtree of the given tree, and such that each edge of $G$ has both endpoints contained in at least one bag.
For each vertex $v$ replaced in a $Y$--$\Delta$ transformation, choose two of its three neighbors,
and in each bag containing $v$ replace $v$ by those two chosen neighbors. The result is a tree-decomposition of the transformed graph of width $O(w)$.
\end{proof}

\begin{lemma}
\label{lem:decompose}
We can find a structural decomposition of graphs in any one-crossing-minor-free family,
into pieces that are either planar or have bounded treewidth (but do not necessarily themselves belong to the family), in NC.
\end{lemma}

\begin{proof}
Let $G$ be a graph in the given family.
We first find a maximal laminar family of separators of $G$ of size at most one using \autoref{lem:laminar}.  By performing additional connectivity computations, we find the pieces that they separate the graph into and the tree $T_1$ of clique-sums by which these pieces can be glued to form $G$. Each piece in this decomposition is an induced subgraph of $G$, so it still belongs to the given family, and is in addition 2-vertex-connected.

Within each piece, we next find a maximal laminar family of separators of size at most two, the pieces they separate the graph into, and the tree of clique-sums for these pieces. By 2-vertex-connectivity, each 2-clique-sum in each piece connects it to a connected subgraph of $G$, and contracting that subgraph into a single edge produces the extra edge added to the piece to represent the 2-clique-sum. Therefore, each piece in this decomposition is a minor of $G$, so it still belongs to the given family. In addition, each piece is 3-vertex-connected, for otherwise we would have found additional two-vertex separators in our laminar family. By replacing each node of $T_1$ by the tree of clique-sums obtained in this way, we obtain a tree $T_2$ of 1- and 2-clique-sums by which these 3-vertex-connected minors of $G$ can be glued together to obtain $G$ itself.

Finally, within each 3-vertex-connected piece $P$, we find a maximal laminar family of separators of size at most three, the pieces they separate $P$ into, and a tree of clique-sums for these pieces.
For each piece $Q$ of the decomposition of piece $P$, let $Q'$ be the graph obtained as a minor of $P$ by contracting each connected component of $Q\setminus P$ into a single vertex. By 3-vertex-connectivity, this contraction process produces a single vertex for each 3-clique-sum involving $Q$, adjacent to the three vertices of the 3-clique-sum.
Then $Q$ itself can be found by performing for each of these vertices a $Y$--$\Delta$ transformation that replaces this degree-3 vertex by a triangle, connecting the three vertices of the 3-clique-sum and making them into a clique.
Because $Q'$ is an indecomposable graph in the given minor-closed family, it must either be planar or have bounded treewidth. And because $Y$--$\Delta$ transformations preserve both planarity (\autoref{lem:planar-yd}) and bounded treewidth (\autoref{lem:treewidth-yd}), it follows that every piece $Q$ of the resulting decomposition is itself either planar or of bounded treewidth.
By replacing each node of $T_2$ corresponding to a piece $P$ by the tree of clique-sums obtained in this way, we obtain a tree of $k$-clique-sums for $k\le 3$, decomposing the original graph $G$ into pieces that are either planar or of bounded treewidth.
\end{proof}

Additionally we can check which pieces are planar, and find a planar embedding for the planar pieces, in NC~\cite{JaSim-SICOMP-82}.

\begin{figure}[t]
\centering\includegraphics[width=0.75\textwidth]{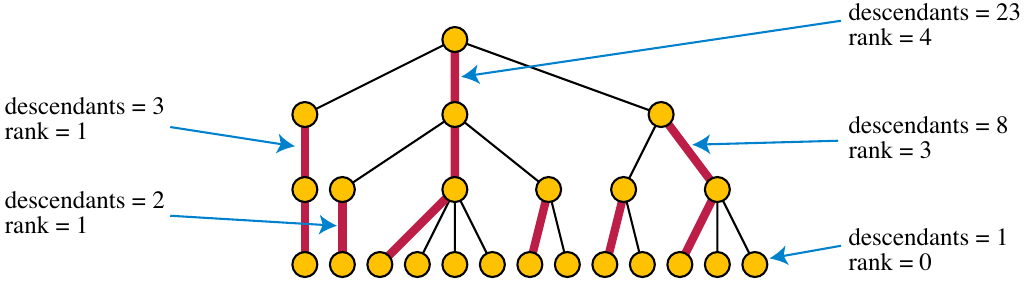}
\caption{Heavy path decomposition of a rooted tree, showing the number $k$ of descendants of the top node of each path (including itself) and the rank $\lfloor\log_2 k\rfloor$ of the path.}
\label{fig:heavypath}
\end{figure}

It will be convenient to define one more tool, a structural decomposition of our structural decomposition. It is the \emph{heavy path decomposition} of a tree (the tree describing the structural decomposition).

\begin{definition}
If any tree $T$ is given an arbitrary root, it may be decomposed into paths by choosing at each non-leaf node of the tree a single child, the one with the most descendants (counting each node as one of its own descendants, and choosing arbitrarily in case of ties). The chosen parent-child edges link together to form a cover of $T$ by vertex-disjoint paths, including some length-zero paths for unchosen leaf vertices. These paths are the \emph{heavy paths} of the decomposition. Each heavy path (other than the one containing the root of the tree) has a \emph{parent path}, the path containing the parent of the topmost vertex in the path. We may define the \emph{rank} of a heavy path whose topmost vertex has $k$ descendants to be $\lfloor\log_2 k\rfloor$.
\end{definition}

An example is shown in \autoref{fig:heavypath}.

Then in a heavy path decomposition of a tree with $n$ nodes, all ranks are integers in the range from $0$ to $\lfloor\log_2 n\rfloor$. If $P$ is any heavy path that does not contain the root of the tree, the rank of $P$ is strictly less
than the rank of the parent of $P$, because $P$ must have at most half as many descendants as its parent (if it had more, it would have been picked as the heavy child from the parent of the top node of $P$). The heavy path decomposition was introduced for its applications in sequential data structures~\cite{SleTar-JCSS-83,HarTar-SJC-84}, and has become a standard tool for graph drawing and geometric graph algorithms~\cite{DunEppGoo-DCG-13,EppGoo-TC-11}, but it has also been recently applied in parallel algorithms~\cite{FisKopKur-CPM-16}. By using the Euler tour technique for trees~\cite{TarVis-SJC-85}, we may easily count the descendants of each node in a tree, obtaining the following result.

\begin{lemma}
\label{lem:heavy}
We may find the heavy path decomposition of any tree, together with the ranks of each of its paths, in NC.
\end{lemma}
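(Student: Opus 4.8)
The plan is to reduce the construction to a handful of standard NC primitives --- Euler tours of trees, list ranking, prefix sums, and undirected connectivity --- each of which is known to be in NC. First I would fix an arbitrary root of the tree~$T$ and build an Euler tour of~$T$, i.e.\ a traversal using each edge once in each direction; this is immediate from the adjacency structure. Applying list ranking to the Euler tour, followed by prefix-sum computations, yields for every node~$v$ the number of descendants of~$v$ (equivalently, the size of the subtree rooted at~$v$). This is exactly the ``count the descendants of each node'' step already advertised just before the lemma, and it is the workhorse of the argument.

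Given the descendant counts, the heavy child of a non-leaf node~$v$ is the child maximizing that count, with ties broken by a fixed rule such as smallest vertex index; computing this simultaneously for all nodes is a parallel maximum over each node's list of children, a total of $O(n)$ work in a tree, hence in NC. The heavy parent--child edges then form a subgraph of~$T$ that is a disjoint union of paths, since each node has at most one heavy child and lies on at most one heavy edge to its parent. A single connected-components computation labels each vertex with the identifier of the heavy path containing it. Within each component the topmost vertex is the unique vertex whose edge to its tree-parent is \emph{not} heavy (or that is the global root); for such a vertex~$u$ with $k$ descendants we emit the path, record~$u$ as its top, and set the rank of the path to $\lfloor\log_2 k\rfloor$, a constant-time arithmetic step. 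The parent path of a non-root path is simply the path identifier of the tree-parent of its top vertex, read off directly.

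I do not expect a real obstacle here: the lemma is essentially the composition of the Euler tour technique with list ranking, parallel maxima, and connectivity, all textbook NC routines. The one point deserving a word of care is that the tie-breaking rule used when selecting heavy children must be applied deterministically and consistently, so that the decomposition --- and hence the rank of every path --- is well defined; any fixed rule works, in keeping with the ``choosing arbitrarily in case of ties'' clause in the definition. The bounds $0 \le \lfloor\log_2 k\rfloor \le \lfloor\log_2 n\rfloor$ and the strict drop in rank from a path to its parent path follow at once from the descendant counts, but are properties of the output rather than ingredients of the NC construction.
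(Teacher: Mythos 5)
Your proposal is correct and follows the same route the paper takes: the paper justifies this lemma with the single observation that the Euler tour technique lets one count descendants of each node in NC, after which selecting heavy children, assembling paths, and computing ranks are routine parallel steps. Your write-up simply makes these routine steps explicit.
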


\section{Perfect Matching Algorithm}

To find a perfect matching in a given graph $G$, from a one-crossing-minor-free family $\mathcal{F}$, we perform the following steps.

\begin{enumerate}
\item We apply \autoref{lem:decompose} to find a decomposition of $G$ into a 3-clique-sum of pieces that are labeled as either planar or of bounded treewidth.
\item We root the decomposition tree arbitrarily, and use \autoref{lem:heavy} to find a heavy path decomposition of the resulting rooted tree structure.
\item For each rank $r$ from $0$ to $\lfloor\log_2 n\rfloor$ of a path in the heavy path decomposition (sequentially), we perform the following steps.
\begin{enumerate}
\item In parallel, for each heavy path $P$ of rank $r$, we replace $P$ in the structural decomposition of $G$ by a single matching mimicking network. The terminals of this mimicking network are the (at most three) vertices of $G$ by which $P$ attaches to its parent in the structure tree,
and the graph it mimics is the one formed by the clique-sum of all pieces of the decomposition of $G$ that either belong to $P$ or descend from $P$. (We will describe how to construct this mimicking network below.)
\item We define a \emph{shallow clique} of the decomposition to be a node of the rooted decomposition tree, representing a clique at which two or more pieces are attached to each other, such that all child pieces are mimicking networks (rather than larger pieces or subtrees), and such that the parent piece is planar.
That is, because of priori replacements, the remaining parts of the decomposition tree below the shallow clique consist only of leaves of the tree, each of which has already been replaced by a mimicking network.
Necessarily, such a clique has size at most three.
At each shallow clique, in parallel, we replace the clique-sum of the attached mimicking networks (a graph of bounded treewidth) with a single mimicking network for the clique-sum, and then replace the parent piece with its clique-sum with this mimicking network,
removing the shallow clique and its descendants from the decomposition tree.
By \autoref{lem:planarity-preserving} this operation preserves the planarity of the parent piece.
Because we remove the clique from the decomposition tree, it also preserves the property that in planar pieces of the decomposition, all 3-clique-sums occur on face triangles.
\end{enumerate}
\item We reverse the sequence of replacements by matching-mim\-ic\-k\-ing networks,
maintaining throughout the reversed sequence a perfect matching for the current graph. When multiple pieces were replaced in parallel, we perform their reversed replacement in parallel in the same way. To reverse the replacement of a single graph by a mimicking network, given a matching in the mimicking network, we construct a corresponding matching in the graph that was replaced, covering the same terminal vertices and all nonterminal vertices.
\end{enumerate}

It remains to explain how to perform step 3(a), in which we construct the mimicking network for a heavy path, and also how we maintain enough information about how we constructed it to reverse the replacement of the path in step 4.

\begin{definition}
For each node interior to a heavy path $P$, define the two \emph{sides} of the node to be the two sets of at most three vertices by which its subgraph is connected to its neighbors in $P$. Similarly, for the topmost node of $P$ (the one closest to the root of the decomposition tree) we let one of the sides be the set of at most three vertices connecting it to its parent, and for the bottommost (leaf) node of $P$ we define one of its sides to be the empty set.
\end{definition}

the matchings that are possible for the subgraph associated with each interior node of a heavy path can
be summarized by a Boolean matrix, the \emph{transfer matrix} of the node (\autoref{fig:transfer}), defined below.
These matrices are not themselves mimicking networks (they are matrices, not graphs), but we will use them in our computation of mimicking networks for heavy paths.

\begin{figure}[t]
\centering
\includegraphics[scale=0.4]{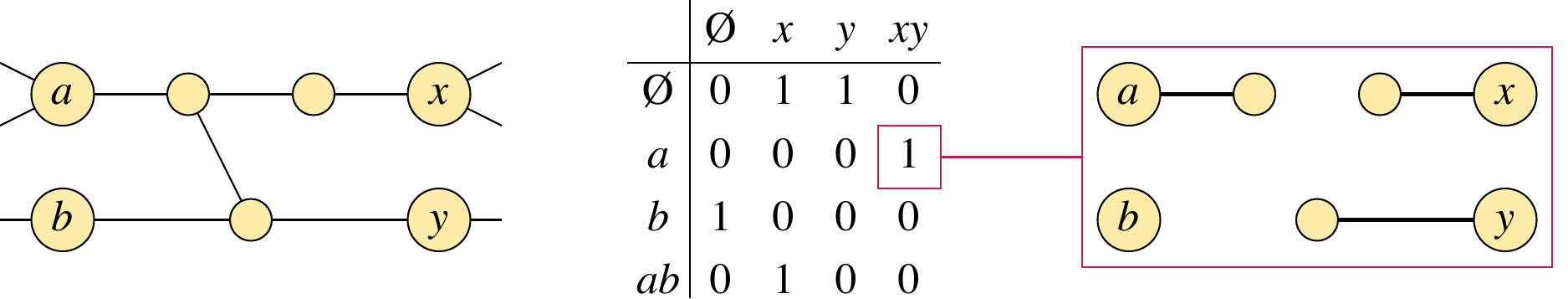}
\caption{A subgraph with sides $\{a,b\}$ and $\{x,y\}$, its transfer matrix, and a matching corresponding to one of the nonzero entries in the matrix.}
\label{fig:transfer}
\end{figure}

\begin{definition}
The transfer matrix is a Boolean matrix whose rows are indexed by subsets of the vertices on one side of the node, the side closest to the leaf of the path, and whose columns are indexed by subsets of the vertices on the other side. The entry of this matrix in row $i$ and column $j$ is true if there exists a matching of the subgraph associated with the node that covers all interior vertices of the subgraph,
and covers the subsets of the two sides indexed by $i$ and $j$.
Otherwise, the entry is false.
\end{definition}

Taking into account the fact that, within a single side of each node, the subsets of vertices that can be matched are constrained to all have the same parity as each other,\footnote{The parity of the set of matched vertices on one side of a node must equal the parity of the set of vertices in the input graph minus the node, on the same side. For otherwise, not every vertex in that component could be matched.} these matrices have dimension at most $4\times 4$, but are in some cases smaller. For instance, the transfer matrix of the leaf node of the path has only one row, corresponding to the empty set, as its set of vertices on the leaf side is the empty set. 

\begin{lemma}
In the algorithm described above,
we can compute a mimicking network for each given heavy path, in NC. Moreover, given a matching in this mimicking network, we can compute a corresponding matching in the clique-sum
of the pieces in the path and its descendants, again in NC.
\end{lemma}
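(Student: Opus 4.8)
The plan is to regard the heavy path $P$ as a linear chain of graphs glued along small separators, to solve a bounded matching problem inside each link of the chain, and then to stitch the links together by a parallel-prefix computation over a bounded ``interface'' universe. First I would describe the chain. Let the piece-nodes of $P$, ordered from the one nearest the root, be $v_1,v_2,\dots$, with the clique-nodes of $P$ interleaved between them. Since $P$ has rank $r$ and every child heavy path has strictly smaller rank, and since we process ranks in increasing order, by the time we reach $P$ every subtree still hanging off a node of $P$ has already been replaced by a matching-mimicking network of at most $f(3)=O(1)$ vertices (any subtree not in this form having already been merged into its parent piece by an earlier shallow-clique step). For each piece-node $v_i$ I would form $G_i$ by gluing all of these bounded-size networks onto $v_i$ at the cliques where its descendant subtrees were attached: if $v_i$ is planar then those cliques are faces of its embedding, so \autoref{lem:planarity-preserving} keeps $G_i$ planar; if $v_i$ has bounded treewidth then $G_i$, a clique-sum of bounded-treewidth graphs over cliques of size $\le 3$, again has bounded treewidth. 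Treating each clique-node as a trivial constant-size piece, $P$ together with its descendants is thereby realized as a clique-sum along a chain $B_1,\dots,B_m$ with $m\le n$, where consecutive $B_j,B_{j+1}$ share a separator $S_j$ of at most three vertices, $S_0=T$ lies in $B_1$, $S_m=\emptyset$, and each $B_j$ is planar, of bounded treewidth, or constant-size.

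Next I would compute, inside each link, a bounded matching relation. For a link $B_j$ with boundary separators $S_{j-1}$ and $S_j$, and for each of the at most $8\cdot 8=64$ pairs $(A,Z)$ with $A\subseteq S_{j-1}$ and $Z\subseteq S_j$, the question is whether $B_j$ has a matching covering exactly $A\cup Z$ among the boundary vertices and every vertex of $B_j\setminus(S_{j-1}\cup S_j)$; equivalently, delete the vertices $(S_{j-1}\setminus A)\cup(S_j\setminus Z)$ and ask for a perfect matching of the rest, a deletion that preserves planarity and bounded treewidth. I would decide this, and in the yes case also extract one witnessing matching $M_j(A,Z)$, using the Anari--Vazirani NC planar matching algorithm when $B_j$ is planar, the log-space version of Courcelle's theorem~\cite{ElbJakTan-FOCS-10} when $B_j$ has bounded treewidth, and brute force when $B_j$ is constant-size. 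Running all links, and all $64$ pairs within each link, in parallel keeps this stage in NC. Write $R_j\subseteq 2^{S_{j-1}}\times 2^{S_j}$ for the resulting feasible relation.

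To combine the links I would use that a vertex of an internal separator is covered from exactly one of the two sides incident to it, and define a product $R\ast R'$ by: $(X,Z)\in R\ast R'$ iff some subset $Y$ of the shared separator $S$ satisfies $(X,Y)\in R$ and $(S\setminus Y,Z)\in R'$. This product is associative, and $R_1\ast\cdots\ast R_m$, read off at the $S_0=T$ end (where $S_m=\emptyset$ forces a unique second coordinate), is exactly the matching pattern of the clique-sum of $P$ and its descendants relative to $T$. I would evaluate it with a balanced binary tree of pairwise products; each product is over a universe of size at most $64$ and so takes $O(1)$ time, and the tree has depth $O(\log m)$, so this is in NC. The resulting pattern is one of the at most $14$ patterns of \autoref{fig:matching-mimic} up to permuting $T$, and I would output the corresponding explicit matching-mimicking network, which by \autoref{lem:planarity-preserving} glues into a triangular face of the parent piece while preserving planarity. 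For the ``moreover'', given a matching $\widehat M$ in this output network covering exactly $X\subseteq T$, I would run a top-down pass over the same product tree: the root is labelled with the demanded interface states $X$ at $S_0$ and $\emptyset$ at $S_m$, and at each internal product node, knowing the demanded states at its two ends, I pick a state at the intervening separator consistent with both children (one exists because the product relation held there). After $O(\log m)$ parallel levels each leaf $B_j$ carries a feasible pair $(A_j,Z_j)\in R_j$; taking the precomputed witnesses $M_j(A_j,Z_j)$ and forming $\bigcup_j M_j$ yields a matching of the chain, since the complementation in the definition of $\ast$ guarantees that at each internal separator the vertices covered from the two sides are complementary, so no vertex is matched twice. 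This matching covers exactly $X$ in $T$ and every other vertex of the clique-sum, and its restriction to any hanging descendant mimicking network is itself a matching of that network covering all of its non-terminal vertices, to be expanded in its turn later in step 4.

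I expect the main obstacle to be precisely that the chain $B_1,\dots,B_m$ can be long, so that neither the forward combination nor the backward reconstruction can be carried out link by link; both must be organized as logarithmic-depth computations over the bounded interface universe, and the delicate points to check are that the interface product $\ast$ is genuinely associative and that a single top-down witness pass over the product tree produces a globally consistent matching. By contrast, the base cases — reducing each link, after deleting a constant number of boundary vertices, to a planar or bounded-treewidth perfect-matching query — are routine given the cited NC algorithms, and the bookkeeping needed to invoke \autoref{lem:planarity-preserving} and the heavy-path rank bound is already in place.
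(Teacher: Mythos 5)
Your proposal is essentially the paper's own proof: both reduce the heavy path to a chain of planar/bounded-treewidth links with $O(1)$-size interfaces, compute per-link matching relations by deleting uncovered boundary vertices and invoking the Anari--Vazirani and Courcelle-based NC matching subroutines, combine them by an associative semiring-style product evaluated on a balanced binary tree, and reconstruct a witness matching by tracing consistent interface states back down that tree. The only differences are cosmetic (the paper packages the relations as $4\times 4$ Boolean ``transfer matrices'' using the parity constraint and stores witness pairs during the forward multiplication, whereas you keep $8\times 8$ relations with explicit complementation at shared separators and recover the intermediate states in a top-down pass), and both are valid.
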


\begin{proof}
At the time our algorithm processes a heavy path $P$, all heavy paths of lower rank (and in particular all paths descending from $P$) will have already been processed.
Therefore, the nodes of $P$ will be of three types, each associated with a subgraph:
\begin{itemize}
\item Planar pieces of the decomposition, possibly with glued-in mimicking networks from lower-rank paths that preserve the planarity of the piece. We define the subgraph associated with the node to be this planar piece.
\item Bounded-treewidth pieces of the decomposition, possibly attached by clique-sums to mimicking networks from lower-rank paths. We define the subgraph associated with the node to be the clique-sum of it and its attached mimicking networks. Because it is a clique-sum of bounded-treewidth graphs, this associated subgraph has bounded treewidth.
\item Cliques of the decomposition, again possibly attached by clique-sums to mimicking networks from lower-rank paths. As with the bounded-treewidth pieces, we define the subgraph associated with the node to be the (bounded-treewidth) clique-sum of this clique with its attached mimicking networks.
\end{itemize}

To construct the mimicking network of the path, we first construct a transfer matrix for each node in the path.
Each Boolean value in each of these transfer matrices can be found in parallel by testing for the existence of a perfect matching in the induced subgraph of the vertices that should be covered.

The transfer matrix for the clique-sum of any contiguous subsequence of nodes in the path is just the product of its matrices, over the Boolean $(\vee,\wedge)$ semi\-ring, in left-to-right order from the leaf end to the root end. As with any product of matrices, we can compute the product matrix in NC,
for instance by associating the nodes of the subsequence with the leaves of a balanced binary tree and, at each interior node of the binary tree, multiplying the matrices from the two child nodes.
Because the leftmost (leaf) matrix is a row vector, the product of all the matrices will also be a row vector, indicating which subsets of the three vertices on the top side of the top node of $P$ can be covered by a matching that also covers all vertices belonging to the subgraphs associated with $P$.
The information in this row vector is exactly what we need to compute a mimicking network for $P$.

To make this process reversible, whenever we compute the product of two transfer matrices we also store, for each true Boolean value in the product matrix, a pair of true Boolean values in the two multiplicands that cause that product value to be true (choosing arbitrarily when multiple pairs would cause it to be true). Then, when we have chosen a matching in the mimicking network for $P$ and wish to replace it by a matching in the subgraphs associated with $P$, we trace back through this stored information to find a sequence of true Boolean values in the transfer matrices of each node of $P$ that together correspond to a matching of the correct type. Then, in each subgraph associated with a node of $P$, we perform a matching algorithm to find a single matching corresponding to the position of this value in its transfer matrix.
\end{proof}

\begin{theorem}
We can find a perfect matching in any graph from a one-crossing-minor-minimal graph family, in NC.
\end{theorem}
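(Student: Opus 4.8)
The plan is to run the four-step algorithm laid out above and to argue separately that it is correct and that it runs in NC. Correctness will be a backward induction along the reversal in step~4. The pivotal fact is that replacing a subgraph $H$ (sharing with the rest of the current graph a terminal clique $T$ of at most three vertices) by a matching-mimicking network $H'$ for $H$ and $T$ preserves the existence of a perfect matching, and in fact induces a correspondence: a perfect matching of the larger graph splits, at the clique $T$, into a matching of $H$ covering $(H\setminus T)\cup X$ and a matching of the remainder covering the complementary vertices, for some $X\subseteq T$; since $H$ and $H'$ are matching-equivalent on $T$, the admissible sets $X$ are unchanged, so perfect matchings of the two versions of the graph correspond up to the choice of matching inside the replaced part. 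The same reasoning applies verbatim to the batched replacements of steps~3(a) (a whole heavy path, via the transfer-matrix product of the preceding lemma) and~3(b) (a clique-sum of mimicking networks, then its merge into a planar parent piece, justified by \autoref{lem:planarity-preserving}). At the end of step~3 the decomposition tree has been reduced to a trivial graph (the mimicking network of the root path, whose terminal set is empty) — or the algorithm has already aborted upon detecting an empty matching pattern — and that trivial graph has a perfect matching exactly when $G$ does. Walking the reversal backward, and using the previous lemma together with a direct matching computation in each planar or bounded-treewidth piece that we re-expand, lifts a matching across each replacement and ends with a perfect matching of $G$ whenever one exists.

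For the complexity bound, steps~1 and~2 are in NC by \autoref{lem:decompose} and \autoref{lem:heavy}. Step~3 is a sequential loop of $\lfloor\log_2 n\rfloor+1$ iterations, and it suffices to show that each iteration is in NC and that the iterations collectively exhaust the tree. A single iteration contracts, in parallel over all rank-$r$ heavy paths, exactly the construction of the previous lemma: the transfer matrices are Boolean matrices of size at most $4\times4$ (using the parity constraint on matchable subsets), each entry is one perfect-matching existence test in the subgraph associated with a node of the path, and — because all lower-rank, hence all descendant, paths have already been processed — each such subgraph is either planar (possibly with planarity-preserving mimicking networks already glued in, so that the Anari--Vazirani planar algorithm~\cite{AnaVaz-17} applies) or a clique-sum of bounded-treewidth graphs and hence of bounded treewidth (so that the log-space version of Courcelle's theorem~\cite{ElbJakTan-FOCS-10} applies); the transfer-matrix product is then formed by $O(\log n)$ rounds of $(\vee,\wedge)$-matrix multiplication over a balanced binary tree. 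Step~3(b) similarly performs bounded-treewidth matching computations and a planarity-preserving gluing, in parallel over shallow cliques. Step~4 reverses the same polylogarithmically many rounds, in each round tracing back the stored transfer-matrix witnesses and running one matching algorithm per re-expanded planar or bounded-treewidth piece, all in parallel; so step~4 is in NC as well.

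The main obstacle — and the reason the heavy path decomposition is used at all — is bounding the depth of the replacement process: we must show that after the rank-$r$ iteration every node lying on a heavy path of rank at most $r$ has been removed from, or absorbed into a surviving piece of, the decomposition tree, so that $\lfloor\log_2 n\rfloor+1$ iterations suffice. This rests on the fact (established above) that a non-root heavy path has rank strictly smaller than that of its parent path: when path $P$ of rank $r$ is processed, every descendant path already has rank $<r$ and has been contracted, so each node of $P$ has a ``clean'' associated subgraph of one of the three types, the transfer-matrix computation of the previous lemma applies, $P$ collapses to a single mimicking-network node, and steps~3(a) and~3(b) then attach that node to, or merge it into, the parent piece. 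The technical heart is verifying that this invariant survives the interleaving of steps~3(a) and~3(b) and that the gluings never destroy planarity or move a surviving clique-sum off a face triangle — which is precisely what \autoref{lem:planarity-preserving} and the face-triangle normalization of the decomposition guarantee. Once this invariant is in place, the theorem follows by assembling the lemmas above.
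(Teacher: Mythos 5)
Your proposal is correct and follows essentially the same route as the paper: the same heavy-path/rank-by-rank replacement scheme, the same appeal to the transfer-matrix lemma and to Lemma~\ref{lem:planarity-preserving} for preserving planarity (and bounded treewidth) of the surviving pieces, and the same count of logarithmically many NC iterations. Your write-up is more explicit than the paper's about the correctness of the matching correspondence under replacement and about the rank invariant, but these are elaborations of the same argument, not a different one.
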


\begin{proof}
All mimicking networks used to replace other pieces in the decomposition tree have bounded size, and therefore bounded treewidth.
When we merge a shallow clique into its parent in the decomposition tree, we are gluing a single mimicking network for at most three terminals into a vertex, edge, or triangle of the parent graph. We only perform this merge step once per cut-vertex, separating edge, or separating triangle of the parent piece.
The parent piece must have been planar before the gluing step, and by \autoref{lem:planarity-preserving} it remains planar. Correspondingly, because this gluing step performs a $3$-clique-sum of a graph of bounded treewidth, if the parent piece was of bounded treewidth before the gluing step, it remains of bounded treewidth. Thus, all the replacements performed by the algorithm preserve the structure of the decomposition, allowing the algorithm to continue correctly in later steps.

There are logarithmically many iterations of the outer loop, and each iteration performs only steps that can be performed in NC. Therefore, the overall algorithm is also in NC.
\end{proof}

\section{Minimum Weight Perfect Matching Algorithm}

Our algorithm for finding minimum-weight matchings in one-crossing-minor-free graphs is similar in outline to the algorithm for unweighted matchings. We find a decomposition tree and its heavy path decomposition, and then for each path in rank order replace it by a mimicking network.
However, to apply this method to minimum-weight perfect matching, we need three additional ingredients:
\begin{itemize}
\item When we compute transfer matrices and their products, we replace the Boolean values in these matrices by numerical values, the minimum weight of each matching, and we replace the $(\vee,\wedge)$ semi\-ring used for existence of a matching with the $(\min,+)$ semi\-ring to compute the minimum weight of a matching. As before, whenever we multiply two of these matrices we store for each entry of the product the pair of entries of the multiplicands that gave rise to its value.
\item We need to be able to construct minimum-weight perfect matchings in the pieces of the structural decomposition, namely planar graphs and graphs of bounded treewidth. For planar graphs a minimum-weight perfect matching algorithm in NC (restricted to polynomially-bounded integer weights) was given by Anari and Vazirani~\cite{AnaVaz-FOCS-18} and by Sankowski~\cite{San-ICALP-18}. For bounded treewidth graphs it appears that the log-space version of Courcelle's theorem~\cite{ElbJakTan-FOCS-10} does not support optimization of structures expressible in monadic second-order logic, so it does not directly provide an algorithm for minimum-weight perfect matching in bounded-treewidth graphs. Nevertheless it is straightforward to obtain an NC algorithm for this problem directly, for instance by combining the known log-space tree-decomposition algorithm (which can be interpreted as decomposing any graph of bounded treewidth into a clique-sum of pieces of bounded size) with our method of heavy path decomposition and multiplication of transfer matrices along each heavy path.
\item We need planarity-preserving weighted matching-mimicking networks for sets of at most three terminals. We detail this ingredient below.
\end{itemize}

In order for a matching-mimicking network to preserve the choice of which perfect matching has minimum weight, we will require it to have the following property: let $M$ and $M'$ be two matchings in the original network that we wish to mimic, each covering all nonterminals and a different subset of terminals, and each of minimum weight for the subset of terminals that it covers.
Then the difference in weights between $M$ and $M'$ should be the same as the difference in weights between the corresponding two minimum-weight matchings in the mimicking network.
We do not require $M$ or $M'$ to have the same weight in the mimicking network as in the original network, but only that the two matchings differ by the same amount. This is because some of the matchings in our mimicking networks have an empty set of matched edges, and we cannot control the total weight of the empty set.

If a mimicking network had two different matchings covering the same subsets of terminals, we would have to worry about which of these two is the minimum-weight matching. Fortunately, this is not an issue:

\begin{observation}
For all of the three-terminal mimicking networks of \autoref{fig:matching-mimic}, and
all matchable subsets of terminals in each network, there is exactly one matching that covers that subset of terminals and all nonterminals.
\end{observation}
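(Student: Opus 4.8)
The plan is to prove this by inspecting, case by case, the finitely many networks drawn in \autoref{fig:matching-mimic}. As noted just before the statement, for $|T|=3$ there are only $14$ matching patterns to realize, and \autoref{fig:matching-mimic} shows one network for each; every such network has only a constant number of vertices. So for each displayed three-terminal network $G'$ and each subset $X$ in its matching pattern, I need to check that there is a \emph{unique} matching of $G'$ covering exactly $(G'\setminus T)\cup X$.

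I would organize the verification around one simple fact about forests: if $F$ is a forest and $S\subseteq V(F)$, then $F$ has at most one matching covering exactly~$S$. This follows by induction on the number of edges of $F$: pick a leaf $\ell$ with its unique neighbor $w$; if $\ell\notin S$, every matching covering exactly $S$ omits the edge $\ell w$ and hence is a matching of $F\setminus\{\ell\}$ covering exactly $S$; if $\ell\in S$, every such matching must use $\ell w$ (forcing $w\in S$) and is otherwise a matching of $F\setminus\{\ell,w\}$ covering exactly $S\setminus\{\ell,w\}$; in either case the inductive hypothesis applies to a forest with strictly fewer edges. Consequently, for every network in \autoref{fig:matching-mimic} that happens to be a forest, the uniqueness asserted by the observation is automatic, and the only thing left to check is the existence of the covering matching --- which is exactly the assertion that $X$ lies in that network's matching pattern.

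This reduces the work to the few networks in \autoref{fig:matching-mimic} that contain a cycle (for instance, the network realizing the pattern whose matchable subsets are the three singletons together with all of $T$ necessarily contains a triangle, since a single nonterminal adjacent to all of $T$ must be supplemented by an edge within $T$). For each such network, and each of its matchable subsets $X$, I would enumerate by hand all matchings that saturate every nonterminal: fix a convenient vertex, branch on which of its constantly many incident edges is used, and observe that in every branch but one some vertex is left with no available edge. Because each of these networks has a bounded number of vertices and essentially a single independent cycle, this is a short finite enumeration.

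The main obstacle is bookkeeping rather than mathematical depth: one must ensure the case analysis is genuinely exhaustive --- every network in the figure, and within each network every subset that its pattern declares matchable --- and that in each cyclic case the hand enumeration of saturating matchings really does collapse to one possibility. Since the networks in \autoref{fig:matching-mimic} were constructed precisely so as to enjoy this property, nothing needs to be redesigned; the proof consists of exactly this verification.
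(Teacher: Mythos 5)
Your proposal is correct and follows essentially the same route as the paper: the paper also disposes of the tree/forest cases by the leaf argument (every leaf forces a unique choice) and then appeals to a short hand enumeration for the remaining networks containing cycles. Your version merely states the forest uniqueness fact as an explicit induction, which is a slightly more formal rendering of the same idea.
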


\begin{proof}
This uniqueness property is true when the mimicking network is a tree or a forest: Every leaf nonterminal must be matched and, regardless of whether we specify to match or not match a leaf terminal, there is only one way to do it. After this choice, the remaining subnetwork is still a tree or a forest, and the result follows by induction.

There are two remaining networks with cycles in  \autoref{fig:matching-mimic}, the one for $\{\emptyset,xy,xz,yz\}$ and the one for $\{xyz,x,y,z\}$. For the first of these two networks (a triangle $xyz$), the unique matching for $\emptyset$ is the empty matching, and the unique matching that covers any pair of terminals is the matching that uses the edge between those two terminals. For the second of these two networks, $z$ is a leaf, and any matching that covers $z$ covers its adjacent nonterminal, leaving a tree or forest network to which the reasoning above for trees and forests applies. The remaining cases for this network match one of $x$ or $y$, and can only do so by matching them to the unique nonterminal vertex.
\end{proof}

\begin{lemma}
\label{lem:min-weight-mimic}
For each of the three-terminal mimicking networks of \autoref{fig:matching-mimic},
it is possible to set weights on the edges of the network to preserve any given assignment of differences to the weights of its matchings.
\end{lemma}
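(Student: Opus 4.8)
The plan is to translate the statement into a linear-algebra condition on a bounded $0/1$ matrix and then verify that condition for each of the finitely many networks in \autoref{fig:matching-mimic}. Fix such a network $G'$ with terminal set $T$, $|T|\le 3$, and let $k$ be the number of matchable subsets of $T$. By the observation above, for each matchable $X\subseteq T$ there is a unique matching $M_X$ of $G'$ covering exactly $(G'\setminus T)\cup X$; write $\chi_X\in\{0,1\}^{E(G')}$ for its edge-incidence vector. Under a weight assignment $w$ the weight of $M_X$ is $\langle w,\chi_X\rangle$, so prescribing an arbitrary vector of pairwise differences among the matching weights amounts to asking that the map $w\mapsto(\langle w,\chi_X\rangle)_X$, followed by the quotient $\mathbb R^{k}\to\mathbb R^{k}/\mathbb R\mathbf 1$, be surjective. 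A short computation shows this is equivalent to requiring the vectors $\{\chi_X : X\ \text{matchable}\}$ to be \emph{affinely independent} in $\mathbb R^{E(G')}$. (When $\emptyset$ is matchable, $\chi_\emptyset=\mathbf 0$; this is exactly the reason, noted in the text, that the weight of the empty matching is uncontrollable and only differences can be prescribed.) Since the parity constraint gives $k\le 4$, this is a small finite check per network.

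For the forest networks I would first reduce. If $uv$ is a pendant edge with $v$ a nonterminal leaf, then $v$ must be matched and $uv$ is its only incident edge, so $\chi_X(uv)=1$ for every matchable $X$; a coordinate that is constant across all $\chi_X$ can neither create nor destroy affine independence, and deleting $u$, $v$ and the edge $uv$ yields a strictly smaller mimicking network with the same affine-independence question. Repeating this removes all nonterminal leaves; for $|T|\le3$ what remains is either empty or a single edge joining two terminals, where affine independence is immediate. So every forest network in the figure satisfies the condition.

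The only cases requiring genuine work --- and the main obstacle --- are the two networks that contain a cycle, where the matchings $M_X$ really do share edges and might a priori lie in a proper affine subspace. For these I would write out the $\le 4$ incidence vectors and check affine independence by hand; the parity bound and the pendant-edge reduction keep each computation tiny. For example, when the pattern $\{\emptyset,\{1,2\},\{1,3\},\{2,3\}\}$ is realized by a triangle on $T=\{1,2,3\}$, the matchings are the empty set and the three single edges, with incidence vectors $\mathbf 0$ and the three standard basis vectors, which are affinely independent; the remaining cyclic network is handled analogously. Finally, because each $\chi_X$ is a $0/1$ vector of bounded length, Cramer's rule produces a solution $w$ with bounded denominators, so after clearing denominators the edge weights realizing any polynomially bounded integer difference vector can themselves be taken to be polynomially bounded integers, as the minimum-weight matching algorithm requires.
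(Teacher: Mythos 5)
Your reformulation is correct and is a genuinely different (and more systematic) route than the paper's. The paper argues greedily: whenever each matching of a network owns an edge used by no other matching, one can fix the other weights recursively and then tune that private edge's weight to hit the prescribed difference; the proof then just exhibits such privately-owned edges in the three networks not already covered by the ``at most two matchings'' or ``no nonterminals'' observations. Your affine-independence criterion is the exact characterization of which networks admit such weightings (surjectivity of $w\mapsto(\langle w,\chi_X\rangle)_X$ modulo $\mathbb{R}\mathbf{1}$ is precisely full row rank of the matrix $[\,\chi_X\mid 1\,]$), and the paper's ``uniquely used edge'' condition is a sufficient condition for it, since a system of vectors each having a private support coordinate is linearly, hence affinely, independent. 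Your version buys a cleaner finite check, a principled reduction for the tree-shaped networks, and the explicit Cramer's-rule bound on the weights, which the paper leaves implicit but which matters for the polynomially-bounded-weight hypothesis of the weighted matching theorem.

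Two small repairs are needed. First, your pendant-leaf reduction does not terminate at ``empty or a single terminal--terminal edge'': it only removes \emph{nonterminal} leaves, so forests such as the path $x\text{--}a\text{--}y$ or the star with a nonterminal center and three terminal leaves survive it intact. These residual forests are still tiny and their incidence vectors are distinct standard basis vectors (hence affinely independent), but they must be listed as cases rather than claimed away. Second, your worked example is slightly off target: a triangle on the three terminals realizes the pattern $\{\emptyset,xy,xz,yz\}$ (the empty matching is admissible since there are no nonterminals), whereas the paper's cyclic networks realize $\{xy,xz,yz\}$ and $\{xyz,x,y,z\}$ and contain nonterminal vertices; the hand verification of affine independence for those two networks is equally short, but it is not the computation you wrote down. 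Neither issue affects the soundness of the method.
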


\begin{proof}
When any of the matchings in one of these networks has an edge $e$ that is not used in any other of the matchings, we can set the weights of its other edges recursively, and then choose a weight for $e$ that causes its matching to have the correct difference with the other matchings. Using this strategy we can handle all of the mimicking networks in  \autoref{fig:matching-mimic}
that have at most two matchings (because surely each of the two has a uniquely used edge)
or that have no non-terminals (because in these networks, every non-empty matching is disjoint).
The remaining cases are:
\begin{itemize}
\item The network for the matching pattern $\{xy,xz,yz\}$. In this network, the two edges incident to $y$ in the figure are uniquely used in their two matchings.
\item The network for the matching pattern $\{xyz,x,y\}$. In this network,  let the fourth vertex of the network be $w$. Then the matching for $xyz$ uses edges $xw$ and $yz$, where $yz$ is uniquely used. The matching for $y$ uses edge $wy$, which is uniquely used. The matching for $x$ uses edge $xw$, which is not uniquely used. Since two of the three matchings have uniquely used edges, we can use the recursive strategy to assign weights to all the edges.
\item The network for the matching pattern $\{xyz,x,y,z\}$. In this network,  the triangle of edges connecting $x$, $y$, and the nonterminal vertex are uniquely used in their matchings.\qedhere
\end{itemize}
\end{proof}

Therefore, in all cases we can use the same three-terminal planarity-preserving matching-mimicking networks as in the unweighted case. These are all the ingredients that we need to prove the following result:

\begin{theorem}
We can find a minimum-weight perfect matching in any graph from a one-crossing-minor-minimal graph family, with polynomially-bounded weights, in NC.
\end{theorem}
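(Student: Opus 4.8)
The plan is to run exactly the algorithm of the previous section, with the three modifications already itemized, and to establish correctness by tracking, for each matching-mimicking network we introduce, an additive weight offset relative to the subgraph it replaces.

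First I would reuse \autoref{lem:decompose} and \autoref{lem:heavy} to obtain, in NC, the $3$-clique-sum decomposition tree and its heavy path decomposition, exactly as in the unweighted algorithm. Then, processing ranks $r = 0, 1, \dots, \lfloor\log_2 n\rfloor$ in sequence, for each heavy path $P$ of rank $r$ I form the \emph{weighted} transfer matrix of every node of $P$: its entries, indexed by pairs of subsets of the two sides, record the minimum weight of a matching of the associated subgraph that covers all interior vertices and exactly the two indicated subsets, or $+\infty$ when no such matching exists. Each entry is computed in parallel by a single minimum-weight perfect matching computation in the appropriate induced subgraph --- planar pieces via the Anari--Vazirani algorithm, bounded-treewidth pieces via the direct NC subroutine described above (log-space tree decomposition, then heavy path decomposition and $(\min,+)$ transfer-matrix products along each path). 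Multiplying the matrices of $P$ over the $(\min,+)$ semiring, in a balanced binary tree of products as before, yields in NC a row vector giving, for each subset of the top side of the top node of $P$, the minimum weight of a matching covering that subset together with all vertices of the subgraphs associated with $P$. By the \hyperref[obs]{uniqueness observation} and the preceding lemma on weighted mimicking networks, this row vector --- after subtracting a fixed matchable entry to normalize --- is realized by a weighted, planarity-preserving matching-mimicking network on the at most three attachment terminals of $P$, which I glue in place of $P$. Shallow cliques are collapsed exactly as in the unweighted algorithm, planarity being preserved by \autoref{lem:planarity-preserving}; collapsing the clique-sum of a bounded number of bounded-size weighted mimicking networks is again a bounded-treewidth minimum-weight matching computation handled by the direct NC subroutine.

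The reversal step keeps the same structure. Having found a minimum-weight matching of the fully reduced graph (itself a single bounded-size mimicking network, or a planar piece, and so handled by the base-case subroutines), I undo the replacements in reverse rank order; within each heavy path I trace back through the witness entries stored at each matrix product to recover, for every node, the pair of side-subsets realized at it, and then call the base-case minimum-weight matching subroutine once per node's associated subgraph. Parallel replacements are reversed in parallel. Each such lift is \emph{exact and unambiguous} precisely because of the uniqueness observation: a transfer-matrix entry denotes a unique matching of the subgraph, so there is never a choice to make about which matching to reconstruct.

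The main obstacle is the correctness of the weight bookkeeping. A weighted mimicking network reproduces only the \emph{differences} between the minimum weights for different terminal subsets, not the absolute weights, so each replacement silently introduces an additive offset --- the gap between the true minimum weight for a chosen reference subset and the weight the network assigns to it (possibly the weight of an empty matching, which we cannot control). I would attach to each mimicking network the offset it induces and prove, by induction on the processing order, two facts: (i) for any admissible global terminal pattern, the minimum-weight matching of the current partially reduced graph equals the true minimum plus the sum of the offsets of all mimicking networks currently present whose replaced subgraphs participate; and (ii) all offsets, and hence all transfer-matrix entries, remain polynomially bounded, since for a subgraph on $m \le n$ vertices with polynomially-bounded weights the spread of minimum matching weights over terminal subsets is polynomially bounded, so normalization keeps every $(\min,+)$ arithmetic operation within NC. Fact (i) shows that the sum of offsets is a common additive constant independent of which pattern is chosen, so minimizing at the end recovers a globally minimum-weight matching; the trace-back then lifts it back through every replacement by (i) again applied at each stage. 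The remaining points --- that $(\min,+)$ matrix products are in NC, and that the decomposition invariants from the unweighted proof (planarity of planar pieces, bounded treewidth of the others, $3$-clique-sums on face triangles) hold verbatim --- carry over unchanged.
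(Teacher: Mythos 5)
Your proposal is correct and follows essentially the same route as the paper: the same decomposition and heavy-path machinery with $(\min,+)$ transfer matrices in place of Boolean ones, the Anari--Vazirani and bounded-treewidth subroutines for the pieces, and the weighted three-terminal mimicking networks justified by the uniqueness observation and the weight-setting lemma; your explicit offset bookkeeping and polynomial-boundedness argument merely flesh out details the paper leaves implicit. One small slip worth noting: the uniqueness observation applies to the small gadgets of \autoref{fig:matching-mimic}, not to the subgraphs behind transfer-matrix entries (which may admit many optimal matchings realizing a given entry), but since any minimum-weight matching realizing the entry serves equally well in the lift, this does not affect correctness.
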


\section{Maximum Flow Algorithm}

A similar algorithmic outline can be used to find a maximum $st$-flow in directed flow networks whose underlying undirected graphs belong to a one-crossing-minor-free graph family.
The differences are the following:
\begin{itemize}
\item We use the flow mimicking networks of Hagerup et al~\cite{HagKatNis-JCSS-98} in place of our new matching-mimicking networks. A flow-mimicking network, for a directed graph with specified edge capacities and specified terminal vertices, is another directed graph with specified edge capacities and the same set of terminal vertices,
such that there exists a flow with given supply and demand amounts at the terminals, obeying all the capacity constraints, in the original graph if and only if there exists a flow with the same supply and demand amounts in the mimicking network. One way of constructing mimicking networks networks is to find a system of minimum cuts separating each partition of the terminals, and to collapse together subsets of vertices that are not separated by any of these cuts, but as Chambers and Eppstein~\cite{ChaEpp-JGAA-13} showed, the networks for up to three terminals can be constructed directly and have the planarity-preserving properties required by our algorithm.
\item We use the same structural decomposition of the given flow network as we used in our matching algorithm, but we modify its heavy path decomposition. We define the root of the structural decomposition tree to be the entire path between the piece containing $s$ and the piece containing $t$ (where $s$ and $t$ are the two flow terminals). Based on this choice of root, we perform a heavy path decomposition of the remaining parts of the tree in the same way as before. Finally, we include the root path as a path in the decomposition, and we assign it a rank greater than that of any other path in the decomposition. In this way, the algorithm will simplify all other paths in the graph before reaching the root path.
\item When processing any single path in the heavy path decomposition, we have no way of using matrices to represent the flows through a subgraph associated with a node of the path.
Instead, we represent these flows with flow mimicking networks for at most six terminals, treating the vertices on  both sides of each node as terminals.
Instead of using matrix multiplication in a semi\-ring to combine pairs of matrices, we combine pairs of mimicking networks by constructing a single mimicking network for their union.
\item In the root path of the decomposition, the two end nodes of the path are the ones containing the terminals $s$ and $t$. Rather than defining one side of these nodes to be the empty set, we use the sets $\{s\}$ and $\{t\}$ respectively.  Alternatively, when the root path consists of a single node whose associated subgraph contains both $s$ and $t$, we use these two sets as its two sides.
In this way, the mimicking network constructed for this root path will consist of a single capacitated edge from $s$ to $t$, in which finding a maximum flow is trivial.
\end{itemize}

Hence we have the following result:

\begin{theorem}
We can find a maximum flow from $s$ to $t$ in any flow network in a one-crossing-minor-minimal graph family, with source $s$ and sink $t$, in NC.
\end{theorem}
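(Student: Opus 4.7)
The plan is to reuse the matching algorithm almost verbatim, making only the substitutions listed in the bullets preceding the theorem. I would first apply \autoref{lem:decompose} to produce the $3$-clique-sum structural decomposition of the underlying undirected graph, in NC. I would then identify the unique simple path $R$ in the decomposition tree between the piece containing $s$ and the piece containing $t$, declare $R$ to be the ``root path,'' and apply \autoref{lem:heavy} to compute a heavy path decomposition of the remaining rooted subtrees. Finally I assign $R$ a rank of $\lfloor\log_2 n\rfloor+1$ so that it is the last path processed by the main loop.

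Next, I would iterate over ranks in sequence. For each rank $r$, in parallel over every heavy path $P$ of rank $r$, I replace $P$ by a single flow-mimicking network on its top side (at most three vertices). For each node $v\in P$ let $G_v$ denote its associated subgraph, which by induction is either a planar piece or a bounded-treewidth piece, with all lower-rank descendants already glued in as bounded-size flow-mimicking networks. I take the (at most six) vertices on the two sides of $v$ as the terminal set and build a flow-mimicking network $M_v$ for $G_v$ using the min-cut construction recalled in the paper: for each of the $O(1)$ partitions of the six terminals, compute a minimum cut and then quotient by the common-side equivalence relation. On bounded-treewidth pieces this is done in NC by Hagerup et al.~\cite{HagKatNis-JCSS-98}, and on planar pieces it is done in NC by Johnson's planar max-flow algorithm~\cite{Joh-JACM-87}. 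The resulting $M_v$ has size bounded by a constant depending only on the terminal count~6. Because mimicking commutes with $3$-clique-sum gluing along a shared side, I can then compose the $M_v$ along $P$ by a balanced binary tree: at each internal tree node, take the $3$-clique-sum of the two children's mimicking networks and reduce it to a fresh flow-mimicking network on the at most six remaining terminals. Each such reduction is a constant-size computation, and there are only $O(\log|P|)$ levels, so the whole path is processed in NC.

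After each rank I also perform the analog of step~3(b) of the matching algorithm: at every shallow clique whose children are all mimicking networks, merge the children into the parent planar piece using the three-terminal planarity-preserving flow-mimicking networks of Chambers and Eppstein~\cite{ChaEpp-JGAA-13}. This keeps every planar piece planar and every bounded-treewidth piece of bounded treewidth, so the invariants of the decomposition are maintained. When the main loop finally reaches $R$, I designate $\{s\}$ and $\{t\}$ as the sides at the endpoints of $R$ (or as the two sides of the single node of $R$, if $s$ and $t$ lie in the same node). The composition along $R$ then yields a bounded-size flow-mimicking network whose terminal set is $\{s,t\}$; the max $s$-$t$ flow in this gadget is the answer, and an actual maximum flow in the original network is obtained by unwinding the composition rank by rank in parallel, using witnesses stored at every min-cut computation and every clique-sum reduction, exactly as in step~4 of the matching algorithm.

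The main obstacle is showing that the ``combine two mimicking networks'' operation stays in NC at every level of the binary tree. This rests on the fact, due to Hagerup et al., that a flow-mimicking network on $k$ terminals has size bounded by a function of $k$ alone. Each combination step produces an intermediate graph on at most $O(1)$ vertices with at most six retained terminals (the shared side of three vertices is discarded), so the subsequent min-cut enumeration and quotient are constant-time; this is exactly where the paper's ``up to six terminals'' remark is used. Once this size bound is in hand, the remaining steps are immediate: the outer loop has $O(\log n)$ iterations, each iteration is in NC, and planarity and bounded treewidth of the remaining pieces are preserved just as in the matching case.
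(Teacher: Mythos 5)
Your proposal is correct and follows essentially the same route as the paper: the same structural and modified heavy-path decomposition with the $s$--$t$ path as a highest-rank root path, six-terminal flow-mimicking networks combined pairwise along each path in place of transfer-matrix products, three-terminal planarity-preserving flow-mimicking networks at shallow cliques, and a trivial final $s$--$t$ gadget that is unwound to recover the flow. The only quibble is a citation detail: the min cuts separating an arbitrary subset of the six terminals in a planar piece need multi-source multi-sink planar flow (Miller and Naor~\cite{MillerN}) rather than Johnson's single-source algorithm, since attaching a supersource and supersink can destroy planarity.
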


\section{Discussion and Open Problems}

We conclude with some open problems that result from our work:

\begin{itemize}
\item Can we prove an explicit upper bound (for instance a closed form formula or primitive recursive function) on the size of matching-mimicking
networks, as a function of the number of terminals? Do weighted
matching-mimicking networks exist for arbitrary numbers of terminals?
\item Because Anari and Vazirani~\cite{AnaVaz-FOCS-18} show how to find perfect matchings in bounded-genus graphs,
our method immediately extends to facial $3$-clique-sums of bounded-genus graphs and bounded-treewidth graphs. However, it is not clear what happens when bounded-genus pieces are glued by clique-sums on triangles that are not faces. Do the facial $3$-clique-sums of bounded-genus graphs and bounded-treewidth graphs include any other natural graph classes?
\item Is it possible to use the structure theorem for more general minor-closed graph families,
allowing non-facial clique-sums, apexes, and vortexes, to find perfect matchings in such families in NC? A first step towards this generalization would be (following \autoref{fig:venn}) to consider the apex-minor-free graph families (families of graphs whose forbidden minors include at least one graph formed from a planar graph by adding one vertex). These families include the one-crossing-minor-free graph families, and like the one-crossing-minor-free graph families they have a useful structure that can be exploited in some algorithms, a functional relationship between their diameter and treewidth~\cite{Epp-Algo-00}.
\item Chambers and Eppstein~\cite{ChaEpp-JGAA-13} used mimicking networks to find near-linear-time sequential maximum flow algorithms for one-crossing-minor-free graphs. Our matching-mimicking networks could also be applied in finding sequential perfect matching algorithms for the same class of graphs. However, in order to obtain a speedup with this method, we need a subroutine for fast perfect matching in planar graphs. It is known that perfect matching in bipartite planar graphs can be solved in near-linear-time, by a reduction to flow with multiple sources and multiple sinks~\cite{BorKleMoz-SJC-17}. Are there similarly fast algorithms for non-bipartite planar perfect matching?
\item To what extent can our methods be extended from perfect matching to maximum matching? Datta et al~\cite{DatKulKum-ISAAC-18} have reduced the search problem to the decision problem for NC planar maximum matching, but it is still not known whether this problem is in NC.
\end{itemize}

\subsection*{Acknowledgements}

The research of David Eppstein was supported in part by NSF grants  CCF-1618301 and CCF-1616248.
The research of Vijay Vazirani  was supported in part by NSF grant CCF-1815901.
We thank Nima Anari for helpful discussions.

\bibliographystyle{plainurl}
\bibliography{matching}
\end{document}